\documentclass[copyright,creativecommons]{eptcs}

\usepackage{breakurl}

\usepackage[usenames]{color} 
\usepackage{amsmath}
\usepackage{randtext}

% packages and macros taken from earlier erasure paper

\usepackage[shorthand,reserved,inference]{semantic} 
% Add selective import to learn Latex
\usepackage{latexsym}
%This should not be involved in \newtheorem, no error in exclusion
\usepackage{amssymb}
% This rises error on \twoheadrightarrow => it should be involved in symbOL
%\usepackage{amsmath}
\usepackage{amsfonts}
\usepackage{stmaryrd}

\usepackage{amsthm}% added this for proof environment

\usepackage{listings} % program listings
\usepackage{graphicx} % color used in listings

\theoremstyle{plain}
\newtheorem{theorem}{Theorem}[section] % chapter]
\newtheorem{definition}[theorem]{Definition}

\theoremstyle{definition}
\newtheorem{example}[theorem]{Example}

\theoremstyle{remark}

% operational semantics 

\mathlig{<<}{\langle}
\mathlig{>>}{\rangle}

\newcommand{\labarrow}[1]{\stackrel{#1}
                                   {\rightarrow}}

% compress space between arrow and label. There has got to be a simpler way...

\newcommand{\opene} {|[} % {\pmb{[}}  % boldface symbol (ugly?)
\newcommand{\closee}{|]} % {\pmb{]}} % boldface symbol

\newcommand{\ersys}[1]{\opene #1}    % {a!BE?#1} 
\newcommand{\erusr}[2]{\opene #1,#2} % {a?BE?#1!#2}
\newcommand{\ercmp}[2]{\opene #1,#2} % {BE,#1,#2}

\newcommand{\outeq}{\stackrel{o}{=}}

\newcommand{\ube}[2]{\labarrow{\erusr{#1}{#2}}} %user begin erasure, 
                   % {\stackrel{\erusr{#1}{#2}}{\longmapsto}} 
\newcommand{\sbe}[1]{\labarrow{\ersys{#1}}}
                  %{\stackrel{\ersys{#1}}{\longmapsto}} %system begin erasure, 
\newcommand{\cbe}[2]{ \labarrow{\ercmp{#1}{#2}}}
    %{\stackrel{\ercmp{#1}{#2}}{\longmapsto}} % composition U|S begin erasure
\newcommand{\enErase}[1]{\labarrow{#1\closee}} 
%{\stackrel{#1}{\twoheadrightarrow}\hspace{-.4ex}\shortmid} 
%end erasure, common

 % hide erasure sessions from a trace

\reservestyle{\keyword}{\mathbf}
\keyword{in,input,erased,erase,on,output,to,from,read,write,if,fi,while,done,true,false,then,else,skip,out}

\lstdefinelanguage{WHILE}{
  keywords={output, input, erased, in, mod, div, to, int, while, if, then, else, for, on, do, skip },
}

\lstset{
  basicstyle=\small\ttfamily,
  keywordstyle= %\color[rgb]{0.15,0.15,0.15}
                \bfseries \sffamily,
  numbers=left, 
  stepnumber=1, 
  numberstyle=\color[rgb]{0.5,0.5,0.5}\ttfamily\scriptsize,
  tabsize=4,
  captionpos=b,
  frame=lines,
  language=WHILE,
  backgroundcolor=\color[rgb]{1,0.98,.98},
  breaklines=true,
  breakautoindent=false,
  postbreak=\space,
  breakindent=5pt,
  escapeinside={/*@}{@*/},
  aboveskip=3pt,
  belowskip=3pt,
  belowcaptionskip=0pt,
  morecomment=[l]{//},
  mathescape=true
}

%\renewcommand{\thelstlisting}{\arabic{lstlisting}}
%it seems fine if we remove this command, isn't it?

\reservestyle{\infixkeyword}{\mathrel\mathtt}
\infixkeyword{:=,;,==}

%{\marginpar{\tiny\raggedright\textsf{\hspace{0pt}#1}}}

% \newcommand{\inputErase}[3]{\<input>\; #1 : #2 \; \<in> \; #3}
% \newcommand{\outputCommand}[2]{\<output>\; #1 \;\<on> \;#2 }

% \newcommand{\inputAct}[2]{#1?#2}
% \newcommand{\outputAct}[2]{#1!#2}

\usepackage{changebar}

\title{A User Model for Information Erasure}
\author{ Filippo Del Tedesco
\institute{Chalmers University of Technology\\ Gothenburg, Sweden}
\email{\randomize{tedesco@chalmers.se}}
\and
David Sands
\institute{Chalmers University of Technology\\ Gothenburg, Sweden}
\email{\randomize{dave@chalmers.se}}
}

\begin{document}
\maketitle

\begin{abstract}
%This is the final sentence in the abstract.
%
  Hunt and Sands (ESOP'08) studied a notion of \emph{information
    erasure} for systems which receive secrets intended for
  limited-time use. Erasure demands that once a secret has fulfilled
  its purpose the subsequent behaviour of the system should reveal
  no information about the erased data. In this paper we address a
  shortcoming in that work: for erasure to be possible the user who
  provides data must also play his part, but previously that role was
  only specified informally. Here we provide a formal model of the
  user and a collection of requirements called \emph{erasure
    friendliness}. We prove that an erasure-friendly user can be
  composed with an erasing system (in the sense of Hunt and Sands) to
  obtain a combined system which is \emph{jointly erasing} in an
  appropriate sense. In doing so we identify stronger requirements on the user than those informally described in the previous work. 
\end{abstract}

%%%%%%%%%%%%%
%%THIS IS A SECTION%
%%%%%%%%%%%%%
\section{Introduction}\label{sec:intro}

The requirement that data is used but not retained is commonplace. As
an everyday example consider the credit card details provided by a
user to a payment system. The expectation is that card details will be
used to authorize payment, but will not be retained by the system once
the transaction is complete.

The study of erasure policies from a language-based security
perspective was initiated by Chong and Myers
\cite{er1}. Hunt and Sands \cite{Hunt:Sands:ESOP08}
argue that to give a satisfactory account of erasure, one needs to
consider \emph{interactive} systems: the card details are used in the
interaction between the customer, the payment system and the bank, and
\emph{then} erased; without the interaction, the card details could be
dispensed with altogether and erasure would be unnecessary. They
present an information-flow based definition of erasure for sequential
programs interacting with users through channels, and a type system
for guaranteeing erasure properties.

This paper deals with the model of erasure described by Hunt and
Sands, and addresses a shortcoming of that work: despite the emphasis
on an interactive view of erasure, there is no explicit model of the
users of the program. As a consequence certain requirements about the
users are described but not formalized. More specifically, users who
supply data for erasure are expected to fulfill certain
obligations. These obligations are intended to complement the erasure
property fulfilled by the program so that together users and the system
truly achieve their data erasure goals. Since users and their
obligations are not modeled, the previous work is unable to prove
this. The main result of this paper is to formalize the obligations of
what we will call an \emph{erasure friendly} user, and to show that an
erasure friendly user when composed with an erasing system satisfies \emph{composite erasure}, which defines  a joint responsibility between user and system in the erasure mechanism.
% is \emph{jointly erasing}. %\draftnote{Need to come up with a good name for the new erasure property of the combined system}.  
As a result of the formalization of the user we discovered additional obligations
required on erasure friendly users which were not described in
\cite{Hunt:Sands:ESOP08}.

In the remainder of this introduction we outline the basic ideas in
the definition of an erasing program from \cite{Hunt:Sands:ESOP08},
and the informal obligations identified for the user of such a
program. Finally we outline remainder of the paper.

\paragraph{Erasure: the basic idea}
%\draftnote{This maybe belongs in the section about erasure?}  
For the purposes of this paper we will work with a simplification of the
erasure notion from \cite{Hunt:Sands:ESOP08}. In that work (following
\cite{er1}) data is labelled using a multilevel
security lattice and erasure is from one level to some higher
level. 
%Here we will ignore the multilevel security issues and assume data from one specific user (level).  

Here we will ignore the multilevel security issues and assume data from one specific user (level), which are requested and erased completely by the system through the following programming construct $\<input>~ x~
\<erased>~\<in>~C$.

%SWITS
%Given this simplification, the programming construct, which we can write as $\<input>~ x~
%\<erased>~\<in>~C$.
%\draftnote{This sentence looks strange to me. Is it ok? I would say "}
Operationally this behaves just as an input statement from the user
which writes into a variable $x$, after which the command $C$ is
executed. But the intention is that this \emph{specifies an erasure
  policy}, namely that the data input will not be used beyond the
command $C$.  If this property holds for all runs and all erasure
blocks then the program is said to be \emph{input erasing}.  To make
this notion precise a noninterference-style definition is used:
suppose that a given computation reaches a statement of the form
$\<input>~ x~ \<erased>~\<in>~C$, and the user provides an input value
$u$ for $x$, the computation reaches the end of the command $C$ and
after this point the program has some observable behavior $t$. The
erasure requirement is that if the user had instead provided some
different value $v$, then the program would still be able to reach
the end of the command $C$ and produce the same observable behavior
$t$. This ensures that after the end of the erasure block the observer
can learn nothing about the user's input by analyzing the system behavior. Then the definition has also to prevent the secret provider being exploited as an involuntary storage for the secret, requiring that its behavior after $C$ has to be the same independently of the value it has sent. Considering a simple model for user (for example a list of values, like in \cite{Hunt:Sands:ESOP08}), this is tantamount requiring that the number of inputs provided by the user to the system has to be the same for all possible $C$ executions.
% \footnote{Note that this is a behavioral notion of erasure; to make
%   it more ``physical'' we simply require an observer who makes lower-level
%   observations}

As an example consider the pseudocode in Figure~\ref{fig:prog} which
depicts a credit card transaction with input from the user. The
intention is that the credit card is erased after the completion of
the transaction, and this is reflected in the reassignment of the credit card variable on line~8. 
In fact the credit card is \emph{not} erased correctly here.  When the server
goes down information about the credit card is retained
via the variable \texttt{payment}, which is outputted to the log file in the last statement of the program. To
make the program input erasing one must additionally overwrite
\texttt{payment} at the end of the while loop.

Note that there are two kinds of inputs from the user: those subject
to erasure, and those (like the input of the shipping address) which
are not. Additionally, outputs during the erasure block may contain
information that is the subject of the erasure, as for example the
echoing of the order information on line~4.

\begin{figure}[htb]
 \begin{lstlisting}
 while serverUp {
     input creditCard erased in { 
         input shippingAddress; 
         output (creditCard++shippingAddress) to user;
         payment := process(creditCard); 
         output payment to bank 
         custDatabase := (custDatabase ++ shippingAddress);
         creditCard := 0 
     } 
}; 
output all variables to logfile;
  \end{lstlisting}
\caption{Example program}
\label{fig:prog}
\end{figure}

\paragraph{User Obligations}

Certain user obligations and assumptions are implicitly built into the
notion of input erasure described above. One is that we clearly cannot
expect erasure to ``work'' for an arbitrary user. For example, if a
user adopts the strategy of always appending his credit card number to
his shipping address then since the shipping address is included in
the customer database the system would inadvertently save the credit
card number and the combined system would not be erasing as intended.

A second example from \cite{Hunt:Sands:ESOP08} reveals more
assumptions about the user: suppose that, before the credit card is erased,
the program provides the user with a special offer
code with the promise ``present this code when
you next shop with us for a 10\% discount''. What if this code is
simply an encryption of the credit card number?  Although it may be reasonable to assume that the user does not exhibit the deliberately bad strategy described above, how do we ensure that the user does not re-input this code after the transaction (since re-inputing the code will enable the program to recover the credit card number)?
Hunt and Sands argue that, in contrast with noninterference, 
it is \emph{not} reasonable for the user to know the semantics of the system and be able to make perfect deductions about what is safe. It was proposed that
the user strategy:  
\begin{itemize}
\item assumes that the user knows that certain data is scheduled for
  erasure, and they are notified when the erasure is complete, and
\item assumes that the user treats any output from the program 
as potentially tainted with
data currently scheduled for erasure.
\end{itemize}
The question we answer in this paper is whether this user strategy is indeed 
sufficient to ensure the desired erasure property. 

\paragraph{Outline} %\draftnote{This is only a provisional outline.}
The approach we take to answer these questions is as follows. The
first step (Section~\ref{sec:erasure-system}) is to recall the notion of input
erasure for a program (what we will henceforth call the \emph{system}). We deviate from the original formulation by representing this in a
syntax-independent form. To do this we make the beginning and end
of an erasure session externally visible as communication events which
declare, respectively, that the next input from the user will later be erased, and  that the input has been erased.  We refer to the input erasure property of a system $S$ as $E(S)$.

The second step (Section~\ref{sec:user}) is to define the interface
and structure of a user, $U$, who will provide the data for erasure. 
We then specify both the communication mechanism between $U$ and $S$ as well as a number of semantic constraints on user behavior which define when $U$ is what we call \emph{erasure friendly}, $EF(U)$. 

Finally (Section~\ref{sec:thm}) we define the desired erasure property, \emph{composite erasure}, for the combined system $EC(U | S)$. We are then able to prove the main theorem, namely that if we combine an input-erasing system with an erasure-friendly user then we obtain a combined system which is erasing: 
\[
E(S) ~\&~ EF(U) \Rightarrow EC(U | S)
\]
Related works are discussed in Section~\ref{sec:related}, and conclusions and further work are outlined in Section~\ref{sec:conclusions}.

% \begin{itemize}
% \item models user as a stream (justified from a NI perspective but not clear what this means from a user perspective)  
% \item informal description of requirements of a user
% \end{itemize}

% Goals of this work
% \begin{itemize}
% \item Model the user providing data to determine whether the user 
% \end{itemize}

% Where to put comparison between erasure and 

%%%%%%%%%%%%%
%%THIS IS A SECTION%
%%%%%%%%%%%%%
\section{Systems and Abstract Input Erasure}\label{sec:erasure-system}

In the previous approach systems were defined through a deterministic
imperative language, which was equipped with input-output primitives.  As we have already seen in
the Example~\ref{fig:prog}, for erasure the most important one is the
block structured input command, where the value received through the
input operation must be erased at the end of the block.
In this section we introduce our system model and corresponding
erasure definition. In some respects it is more general and abstract
and in other respects it is simplified when compared to the earlier work. 

\begin{itemize}\item 
  The generalization comes from handling a model which is independent
  of a particular programming language syntax. This is possible
  because we are focussing on the interaction between user and system,
  rather than the \emph{verification} of the erasure property of
  systems.  What we retain from the earlier model is the assumption of
  determinism, and a notion of \emph{well-formedness}
  which includes an abstract
  counterpart to the use of a block-structured erasure construct.
\item
  The simplification here is that (i) we focus on erasure of data from
  a single user, and (ii) we assume that the information is totally
  erased, rather than the more general case of being erased to a
  higher security level. Because of these simplifications we no longer
  need to model a multilevel security lattice. Instead we simply
  assume that there is one other user of the system whose security
  level is different from the user supplying the data to be erased.
\end{itemize}

% In this work it is no longer necessary to have such a low level perspective on the system, because here we are focussing on the main aspect of the interaction between the system and its user, namely their \emph{observable behavior}, from which we can understand if erasure was performed or not. Hence we propose a more abstract description for the system, which should be able to capture the essential elements of erasure with a clear and simple notation.

% Also here a system is an agent which is able to interact with its environment, as well as perform internal computations based on its internal state and the values received. In order to keep the presentation as simple as possible, we assume that each system is able to communicate in a bidirectional way on a single channel $a$ (which will be shared with the user), while it is able to use channel $H$ and $L$ just for output operations. $H$ and $L$ are symbolic representatives for channels $c \sqsupseteq a$ and $c \not \sqsupseteq a$ respectively, and they represent a simple way to check the compliance of $S$ behavior to erasure policy: once the secret is erased, the stream of data through those channels has to be independent with respect to the secret value received.

Since we will model erasure properties in terms of inputs and
outputs we are not interested in the representation of system's internal state. Thus we are able to describe it through a labelled transition system
$S=(\mathcal{S},\mathcal{L},\mathcal{T})$, in which the components have the following meaning:
\begin{itemize}
\item $\mathcal{S}=\{s_i: i$ is an index$\}$ is the set of (distinct) states, in which $s_0$ is the initial one. The counterpart of this element in the programming language approach was the entire code for the program, together with the empty memory. 
%\draftnote{Probably $\mathcal{S}$ is not necessary, but I find quite comfortable to have it}
\item $\mathcal{L}=\{a!v,a?v,a!BE,a!EE,b!v\}$ is the set of labels for $S$: the first four are used for the communication with the user $U$ through channel $a$. The other represents an output event on a different channel $b$.
While $v$ represents an ordinary value received or sent during an I/O interaction (which is a member of the domain of values $\mathbb{V}$), $BE$ and $EE$ are considered reserved values, which mark the beginning and end of an erasure ``block''.
\item  $\mathcal{T}\subseteq \mathcal{S} \times \mathcal{L} \times \mathcal{S}$ is the set of transactions such that if a pair of values $v,s_v$ exists such that $(s, a?v, s_v) \in \mathcal{T}$, then for each $w \in \mathbb{V}$ a state $s_w$ exists such that $(s, a?w, s_w) \in \mathcal{T}$. Basically we require that $S$ is an \emph{input enabled} system, ready to accept each possible value offered on channel $a$.
% \draftnote{I tried to include here the "input enabled" property.}
\end{itemize}
We need to impose some further restrictions on the behavior of the system, the first of which is determinism; we take the definition from \cite{str2}.  
% This definition of the system is too liberal and does not capture the old class of systems: for example we need to impose on $S$ a deterministic behavior (this definition came from \cite{str2}).
\begin{definition}[Deterministic System]
A system $S$ is deterministic if:
\begin{itemize}
\item whenever $s \stackrel{l_1}{\rightarrow} s_1$ and $s \stackrel{l_2}{\rightarrow} s_2$ and $l_1 \not = l_2$ then $l_1=a?v_1$ and $l_2=a?v_2$ for some values $v_1$, $v_2$. %\draftnote{$v_1 \neq v_2$? - redundant}
\item If $s \stackrel{l}{\rightarrow} s_1$ and $s \stackrel{l}{\rightarrow} s_2$ then $s_1=s_2$.
\end{itemize}
\end{definition}
%%%%%%%%%%%%%%%%%%%%%%%%%%%%%%%%%%%%%%%%%%%%%%%%%%%%%%%%%%%%%%%%%%%%

% Say something about the use of determinism?

Now  we can proceed towards the erasure part of system specification. Since this notion is closely related to the system behavior, it is convenient to use the notion of \emph{trace} which records a sequence of possible interactions from a given state. 

%, which basically represents the evolution of the system from one of its state to another, possibly performing several distinct steps.

\begin{definition}[Traces of $S$]
Let  $s_i \stackrel{l_0}{\rightarrow} s_{i+1} \stackrel{l_1}{\rightarrow}  \cdots \stackrel{l_{n-1}}{\rightarrow} s_{i+n}$ be a sequence of $n\geq 0$ adjacent edges in $S$. The concatenation of labels $l_0l_1\cdots l_{n-1}$ is a trace for $s_i$, and can be represented as $s_i\stackrel{l_0l_1\cdots l_{n-1}}{\twoheadrightarrow} s_{i+n}$.  If we are not interested in the final state we can write it as $s_i\stackrel{l_0l_1\cdots l_{n-1}}{\twoheadrightarrow}$, while if we are not interested in the trace per se, but just in saying that $s_{i+n}$ is reachable from $s_i$ we write $s_i \twoheadrightarrow s_{i+n}$. A trace $t_1$ is a prefix of a trace $t_2$ ($t_1 \preceq t_2$) iff  a (possibly empty) trace $t_3$ exists such that $t_1t_3=t_2$. The set of all traces of $s_i$ is denoted by $T(s_i)$, and it always contains the empty trace $\epsilon$ which corresponds to a sequence of zero edges from $s_i$. The traces of $S$ are denoted by $T(s_0)$ or, equivalently, by $T(S)$.
\end{definition}

The next constraint we impose on the system concerns the structure of
the erasure operation. As noted in \cite{Hunt:Sands:ESOP08}, the user
needs to be aware of not only when a value supplied is subject to
later erasure, but also the point at which the erasure of an input is
intended to be complete.  For this purpose we have the outputs $a!BE$
and $a!EE$, by which $S$ communicates the beginning and the end of the
erasure session to the user. The protocol we assume is that an
input from the user which is subject to erasure is % followed by
preceded by a $BE$ message, as in $s_1 \stackrel{a!BE}{\rightarrow}
s_2 \stackrel{a?v}{\rightarrow} s_3$.  A communication $s
\stackrel{a!EE}{\rightarrow} s'$ will be the signal to the user that
the erasure is now complete. But to which input does the signal refer?
Here we use an analog of the block-structure: we assume that the $BE$
and $EE$ are well bracketed so that each $EE$ message uniquely
identifies a corresponding $BE$ and input.

% Unfortunately the system is still too liberal: actually the next aspect we need to consider is how to involve the user in the erasure process. As we have already mentioned in the previous section, we need a mechanism, based on the reserved labels $a!BE$ and $a!EE$, to let $S$ communicate the beginning and the end of the erasure session to its user.  Our choice is to distinguish inputs of non-erasing values (represented by $a?v$) from erasing inputs by joining the input operation with the begin erasure label, as in $s_1 \stackrel{a!BE}{\rightarrow} s_2  \stackrel{a?v}{\rightarrow} s_3$, which will be closed by an appropriate $s \stackrel{a!EE}{\rightarrow} s'$ for a certain $s$. The evolution from $s_3$ to $s$ is what we call an \emph{erasure session}. \draftnote{Although the introduction contains those element in the other way around, it seems to me unclear then how to pick a value from $\delta$ without knowing that a secret has been required}

To refine the definition of $S$ according to those intuitions we proceed in two steps: first we propose a constraint on system structure related to the use of $BE$ and $EE$ actions, then we state the erasure definition.

\begin{definition}[Well-Formed System]

Let $s \sbe{v} s_1$ abbreviate $s \stackrel{a!BE}{\rightarrow} s' \stackrel{a?v}{\rightarrow} s_1$.
A system $S$ is well formed iff for each $t \in T(S)$: 
\begin{itemize}
\item if $t=t'(a!BE)$, and $s_0 \stackrel{t'}{\twoheadrightarrow} s$, then $s \sbe{v}$ for each possible value $v \in \mathbb{V}$;%
%
%\draftnote{Could require this for all inputs -- looks like a standard property for value passing systems. Goto the previous draftnote}
%
\item in each prefix $t'$ of $t$, the number of $\ersys{v}$ labels is always greater or equal than the number of $a!EE$ labels;
\item if $s_0 \stackrel{t}{\twoheadrightarrow} s \not \rightarrow$ then the number of $\ersys{v}$ in $t$ is equal to the number of $a!EE$ in $t$.
\end{itemize}
\end{definition}
% Intuitively, the above definition states that a system, to be well-formed, has to use the erasure related commands in an appropriate way, which means every time an erasure input is performed, there will be an associated closure. Since in a single trace there could be more then one erasure related input, we assume a \emph{well nested structure} among them, which means that the first erasing input operation in $t$ will be closed by the last $a!EE$ in $t$ and so on. 

%So, for a well-formed system, whenever $s_0\twoheadrightarrow s_1\sbe{v}s_2$ there will be a trace $u$ and a state $s_3$ such that $s_2 \stackrel{u}{\twoheadrightarrow}s_3 \stackrel{a!EE}{\rightarrow}$ and ($\ersys{v}$, $a!EE$) will be the border of erasure for the action $a?v$. We will refer to $u$ as the \emph{erasure session}. Henceforth $s_2 \stackrel{u}{\twoheadrightarrow}s_3 \stackrel{a!EE}{\rightarrow}$  will be written as  $s_2 \enErase{u}$ or $s_2 \enErase{}$ if we are not interested in the trace $u$.

%\DTNoteB{
%So whenever $s_0\twoheadrightarrow s_1\sbe{v}s_2$ and the system is able to go on with a trace $u$ reaching a state $s_3$ such that $s_2 \stackrel{u}{\twoheadrightarrow}s_3 \stackrel{a!EE}{\rightarrow}$,  where $a!EE$ closes the correspondent $a!BE$, we will say that ($\ersys{v}$, $a!EE$) is the border of erasure for the action $a?v$ and $u$ is the \emph{erasure session}.} Henceforth $s_2 \stackrel{u}{\twoheadrightarrow}s_3 \stackrel{a!EE}{\rightarrow}$  will be written as  $s_2 \enErase{u}$ or $s_2 \enErase{}$ if we are not interested in the trace $u$.

%\DTNoteB{
Let us consider an execution of $S$ such that $s_0\twoheadrightarrow s_1\sbe{v}s_2$. If a state $s_3$ exists such that $s_2 \stackrel{u}{\twoheadrightarrow}s_3 \stackrel{a!EE}{\rightarrow}$ and $a!EE$ corresponds, according to well bracketing, to the $\sbe{v}$ action, then we write $s_2 \enErase{u}$. Note that $u$ may be infinite, in that case $\sbe{v}$ does not have a matching $a!EE$.
%}

Now the last aspect we need to consider to translate the previous
description for systems into the new abstract setting is related to the
input erasure definition. In the original definition the user is represented simply by a stream of input
values. We represent this in our more abstract setting by specializing $S$ to a particular input stream. (Later we will show that this representation of a user as a stream is indeed sufficient in a certain sense).
\begin{definition}
  Let $I$ be a stream %, $I'$ range over streams of values. 
  and, for any $n > 0$, let $I(n)$ 
  denote the $n$-th value of the stream.  For any stream $I$ 
  let $S(I)$ denote the refinement of $S$ in which all
  input nondeterminism has been resolved as follows: $t \in T(S(I))$ if and only if
  $t \in T(S)$, and the for all $n > 0$, the $n$-th input label occurring in $t$ has value $I(n)$.
\end{definition}
%SWITS
%\draftnote{Give small example?}
Note that the transition system for $S(I)$ is just
a single trace (potentially infinite) of $S$.

%\draftnote{Fuzzy description. Formal definition of $S(I)$?}
% First of all we need to retrieve the old I/O environment: in the previous approach the system was associated to a set of streams, one for each channel, which basically are (possibly infinite) list of values fixed before the computation starts (each value provided by a stream depends just on the number of values requested previously). Since $S$ was deterministic, defining a family of streams for $S$ was the same of resolving each possible choice in $S$, actually obtaining one of its possible executions. 

% In our new framework a stream-like environment is meaningful as well: although the family contains exactly one element (for the channel $a$, which is the only input source for $S$), associating a stream $I$ to $S$ is exactly like considering one \emph{instance} $S(I) \in T(S)$. 

% According to this perspective, the old \emph{input erasure} property can be easily restated in this way.

\begin{definition}[Abstract Input Erasure $E(S)$]
Let $S$ be a deterministic, well-formed system. Let $i(t)$ be an operator over traces which counts the number of input labels (of the form $a?v$) contained in $t$. Consider an arbitrary stream $I$ and the associated refinement $S(I)$, such that $s_0 \stackrel{t}{\twoheadrightarrow} s_1 \sbe{v} s_2 \enErase{u} s_3 \stackrel{z}{\twoheadrightarrow}$. 
Suppose that the input $v$ is the $n$-th input of the trace. 
Then $S$ is input erasing if for any $I'$ which differs from $I$ only at position $n$, we have that 
% $I'=_{\neg i(t)+1} I$, $I'(i(t)+1)=w$, it holds that 
$S(I')$ has the trace $s_0\stackrel{t}{\twoheadrightarrow} s_1 \sbe{w}s_2' \enErase{u'} s_3' \stackrel{z}{\twoheadrightarrow}$ and $i(u)=i(u')$.
\end{definition}
%Here $I'=_{\neg i(t)+1} I$ means that streams $I'$ and $I$ are the same but in position $i(t)+1$. 
As we can see, the definition says that after erasure, the subsequent behavior of the system is independent of the value supplied at the beginning of the erasure session. The only constraint on the behaviour during the erasure session is that the number of inputs does not depend on the value which is subject to
erasure (for more details on the motivation for the latter point see \cite{Hunt:Sands:ESOP08}).
% an Abstract Input Erasing system is able, independently from the subject of the erasure session $v$, 
% to close the erasure session in a way that its behavior is the same (although states $s_3$ and $s_3'$ could be internally different, they behave in the same way) and the surrounding environment (the stream which provides value) does not provide a way to infer the value of the erased secret. We address the reader interested in more explanations and examples to \cite{Hunt:Sands:ESOP08}.

%%%%%%%%%%%%%
%%THIS IS A SECTION%
%%%%%%%%%%%%%
\section{Erasure Friendly Users}\label{sec:user}

In this section we define the general user model, and the requirements
that we need to impose on such a user, \emph{erasure friendliness}, to
ensure that its composition with an input-erasing system achieves composite erasure.
%\ldots \draftnote{Complete with property name. F: joint erasure seems to me clear and reasonable. Why do not we use it?}

Since a reasonable scenario for erasure should involve either human or
computer agents, the model of the user of an input erasing
system % has to exploit their common aspects and abstract away their differences. This suggests that it is not promising to
should not make strong assumptions about the internal structure of the
user (e.g., defining it through a particular language) as that would
only really be suitable for modeling a computer agent.  As a first
approximation we suppose that we want a representation
which, hiding all internal details, allows us to specify some
behavioral constraints for the user and then observe its interactions
with $S$ to prove its soundness with respect to the notion of erasure.

Following the approach we used with systems, a labelled transition
system seems to be suitable because it hides internal evolution and
shows precisely the possible sequences of interaction. But in this
case we need to consider a subtle aspect: the notion of erasure is
meaningful if we have a handle on the secrets which are to be erased,
since in particular we need to vary these secrets in order to observe
the effect of the variation.  A labelled transition system is too
abstract to make this easy. Instead we add just enough structure to
the labelled transition system to make it possible for us to control
the erasure related information.

%one-time values, which are changed in order to check if their presence is visible after erasure finishes or not, it turns out that a LTS for the user is too abstract to be useful for our purposes, because such changes are not easily modeled in its structure. So we have to enrich the LTS in a way that it is possible for us to control the erasure related information.

We therefore propose to represent the user through two components:
\begin{itemize}
\item a behavioral component $U$, the LTS, in which we include the computational aspect of the interaction with the system;
\item a memory component $\delta$, the store of secrets from which the $U$ component has to fetch all the erasure related information before sending them to the system.
\end{itemize}

Let us define these concepts precisely.
\begin{definition}[Abstract User]
An abstract user is given by an LTS $U=(\mathcal{S},\mathcal{L},\mathcal{T})$ such that:
\begin{itemize}
\item $\mathcal{S}=\{u_i:i$ is an index$\}$ is the set of (distinct) states, in which $u_0$ is the initial one.
\item $\mathcal{L}=\{a?v,a!v,a?BE,a?EE,i?v ~ | ~ i\in \mathcal{I}\}$, assuming $v \in \mathbb{V}$ like we did in the system description, is the set of labels for $U$, where the first four elements represent the counterpart of $S$ actions, while the last one represents the input request for a value coming from $\delta$. The idea is that the memory $\delta$ is a set of labelled values, where  $\mathcal{I}$ are the labels used to distinguish them. Since the only property we require on $\mathcal{I}$ is to be a set of unique indexes, we can simply consider $\mathbb{N}^{+}$ as a representative for $\mathcal{I}$.
 \item $\mathcal{T}\subseteq \mathcal{S} \times \mathcal{L} \times \mathcal{S}$ is the set of transitions.  We assume that also for users the input enabled condition holds, hence we have branching over all possible values of the domain in each input action of $U$.\end{itemize}
\end{definition}

It is worth pointing out that while $S$ is deterministic, $U$ does not need to be deterministic, and this makes our framework quite expressive. 
%Another particular aspect of user definition are $i?v$ actions, which basically represent the interface of the user to erasure related information.

In order to understand the intention of the $i?v$ actions, let us define the $\delta$ component: $\delta$ is simply a function $\mathbb{N}^{+} \rightarrow \mathbb{V}$ such that for each index it provides the value corresponding to that index.

% \begin{definizione}[Memory]
% With $\delta$ we denotate a function on $\mathbb{N}$ such that $\delta(i)=v$, where $v\in \mathbb{V}$. $\delta$ provides the input $v$ to the user $U$ whenever it emits an input request on channel $i$.
% \end{definizione}

%\draftnote{I think both the first and the second review had a problem with this pragraph, I guess removing it will make everything clearer: 
%It is useful to give some remarks about $\delta$: first, one could think about associating more then one value to each index, making $\delta$ a relation rather than a function. Although in principle it seems useful (i.e. one single index could represent a category of erasure related information, like a set of credit card numbers), we will see that for guaranteed erasure, erasure related items should be used only once in the interaction with the system, hence a relation-like $\delta$ is obtained splitting the set of values associated to one index in different function-like memories.
%SWITS
%\draftnote{Cut this paragraph?}
% With this premise, a memory $\delta$ is isomorphic to a stream.}
%, each of them being constant in giving always the same value for a certain index $i$. 
As for the system, we can define the notion of an instance $U(\delta)$, which is obtained by refining the behavior of an abstract user $U$ through a memory component $\delta$. In other words, the traces of $U(\delta)$ are also traces of $U$, but with the restriction that for
any transition labelled $i?v$ we have that $v = \delta(i)$.

Note that plugging together $U$ and $\delta$ is not a way to resolve the nondeterministic behavior, since users can be purely nondeterministic, independently of input operations. \footnote{We can remark here that it would also be rather natural to define $\delta$ as a transition system and to compose $U$ and $\delta$. But in any case the parallel composition would not be a standard CCS-style one since sometimes we will need to witness the values passed from $\delta$ to $U$.}

Now that we have a general structure for $U$, we can specify some
notational conventions about its behavior. The notion of trace is the same we gave for systems, hence let us consider $T(U)=T(u_0)$ as the set of traces for $U$, and $T(U(\delta)) \subseteq T(U)$. As for systems we introduce some convenient abbreviations for certain common sequences of interactions: let $u \ube{i}{w}u'$ be an abbreviation of $\exists u_2,u_3.
u \stackrel{a?BE}{\rightarrow} u_2 \stackrel{i?v}{\rightarrow}
u_3\stackrel{a!w}{\rightarrow} u'$,
%\draftnote{Is it the same for $u_3$?} 
and $u' \enErase{t}$ be a shorthand for $\exists u_1. u' \stackrel{t}{\twoheadrightarrow}u_1
\stackrel{a?EE}{\rightarrow}$. 

%The following example should clarify the reason it is enough to work with
%light traces.

%\begin{example}
%Suppose we would like to model a user $U$ who starts reading an erasure related input from $\delta$ in position $0$.\draftnote{streams started at 1. Problem?}
 %Assuming two possible values, 0 and 1, we can describe two different behaviors, namely setting up the erasure communication in the first case, and reading another secret for local purposes before doing that in the second. Then $U$ should be something like:
%\begin{itemize}
%\item $\mathcal{S}=\{u_0,u_1,u_2,u_3,u_4,u_5,u_6\}$;
%\item $\mathcal{T}=\{(u_0,0?0,u_1);(u_0,0?1,u_2);(u_2,1?v,u_3);(u_1,a?BE,u_4);(u_3;a?BE,u_5);(u_4,a!(0,0),u_6);(u_5$ $,a!(0,1),u_6)\}$
%\end{itemize}

%For $\delta$ we can consider a function like $\{0 \mapsto 1;1 \mapsto 40\}$ as well us $\{0 \mapsto 0; 1 \mapsto 23\}$, but nonetheless $U$'s visible behavior (two possible traces, $u_0 \stackrel{(a?BE)(a!(0,1))}{\twoheadrightarrow}u_6$ or $u_0 \stackrel{(a?BE)(a!(0,0))}{\twoheadrightarrow}u_6$) shows no difference but in the actual value of secret, hence we can simply hide the internal $i?v$ operations.
%\end{example}

With a clear definition for the user we can specify the communication model for $U$ and $S$ which will allow us to explore in detail the message exchanges through the channel $a$. For a given $U$ and $S$, $U|S$ denotes the transition system consisting of states formed from the product of the states of $U$ and $S$ respectively, with initial state $u_0|s_0$, and transitions given by 
% It is based on the parallel operator $|$, which is described through
the rules in Figure~\ref{tab:AbsCom}.
\begin{figure}[h]
\begin{center}
\begin{tabular}{cc}
$\inference*{u \stackrel{a?v}{\longrightarrow}u' & s \stackrel{a!v}{\longrightarrow}s'}{u|s\stackrel{v}{\longrightarrow}u'|s'}$
& 
$\inference*{u \stackrel{a!v}{\longrightarrow}u' & s \stackrel{a?v}{\longrightarrow}s'}{u|s\stackrel{v}{\longrightarrow}u'|s'}$
\\[3ex]
$\inference*{u \stackrel{a?BE}{\rightarrow} u' & s \stackrel{a!BE}{\rightarrow} s'}{u|s\stackrel{BE}{\longrightarrow}u'|s'}$
& 
$\inference*{u \stackrel{a?EE}{\longrightarrow}u' & s \stackrel{a!EE}{\longrightarrow}s'}{u|s\stackrel{EE}{\longrightarrow}u'|s'}$
\\[3ex]
$\inference*{ \ \ s \stackrel{b!v}{\longrightarrow}s'}{u|s\stackrel{b!v}{\longrightarrow}u|s'}$ $b \not = a$
& 
$\inference*{ \ \ u \stackrel{i?v}{\longrightarrow}u'}{u|s\stackrel{i?v}{\longrightarrow}u'|s}$ $i \not = a$
\end{tabular}
\caption{transition rules for $U | S$}\label{tab:AbsCom}
\end{center}  
\end{figure}
%\draftnote{The value is visible in the interaction. But those are what I understood as "transparent channels" / fine,D}

As we can see, the labels for the joint environment $U|S$ are different from 
both user and system, since they show the value of the message which is passing in a communication between user and system. But since we are modeling erasure using a noninterference-style, it is quite reasonable to have such a ``transparent'' channel between $U$ and $S$.

Following the same approach we adopted with $S$, there are some aspects of the user behavior we have to constrain in order to ensure its positive attitude towards erasure. Since some of them are related to its internal structure, we need the counterpart of system well formedness, which we define as follows.
\begin{definition}[Well Formed User]
A user $U$  is well formed iff for each trace $t \in T(U)$: 
\begin{itemize}
\item if $t=t'(a?BE)$, and $u_0 \stackrel{t}{\twoheadrightarrow} u$, then there exists an $i$ such that for all $v$, $u \stackrel{i?v}{\rightarrow} u_v \stackrel{a!v}{\rightarrow} u_v'$;
\item if $t=t_1(i?v)t_2$, $t_2 \neq \epsilon$,  
then $t_1$ ends in $a?BE$ and $t_2$ begins with $a!v$;
%\draftnote{Too strong? F: it seems to me ok: a?BE i?v and a!v are a triple which has to be strong, and the branching is on i?v}
\item  in each prefix $t'$ of $t $ the number of $\erusr{i}{v}$ labels is always greater or equal than the number of $a?EE$ labels, while if $u_0 \stackrel{t}{\twoheadrightarrow} u \not \rightarrow$ then the number of $\erusr{i}{v}$ in $t$ is equal to the number of $a?EE$ in $t$.
\end{itemize}
\end{definition}

While the last requirement is similar to the feature requested on system side, the first two state how the user actually exploits the memory saying that whenever a value is sent to the system at the beginning of an erasure exchange, it has been fetched from the memory. In order to explain why we need such property, let us consider the following example.
\begin{example}
As we have already said, the erasure property we are going to specify is described in a noninterference style, which means that we need a way to change secrets and let them flow to the system in a controlled way.
Suppose that we have such user $U$:
%\begin{itemize}
%\item $\mathcal{S}=\{u_0,...,u_5\}$
%\item $\mathcal{T}=\{(u_0,a?BE,u_1);(u_1;1?0;u_2);(u_1;1?1;u_3);(u_2,a!0,u_4);(u_3,a!0,u_4);(u_4,a?EE,u_5)\}$
%\end{itemize}
\begin{figure}[h]
\begin{center}
\includegraphics[width = 0.4\textwidth]{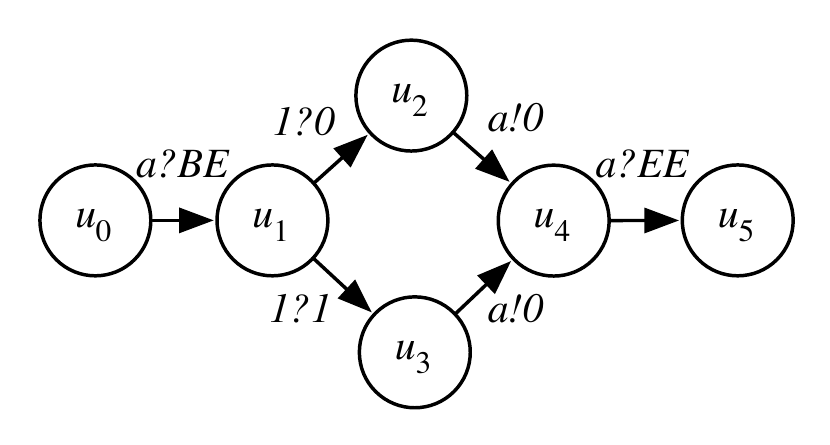}
\end{center}
%\caption{This is the caption at the bottom of the image}
\end{figure}

Although $U$ implements erasure protocol properly, it sends values for erasure independently of $\delta$, hence we cannot track the dependence between the value sent and subsequent behaviour by varying $\delta$.
% however we choose $\delta$ (assuming 0 and 1 as admissible values for position 1) $U$ is sending the same value to $S$, hence the memory is not a way to check the erasure property in a noninterference style. 
\end{example}

There are some other remarks on user well formedness it is worth pointing out: 
\begin{itemize}
\item we require that local read operations are only performed for the purpose of erasure related exchanges. Although this requirement is not strictly necessary to characterize an erasure friendly user, it makes the presentation of our results simpler, without restricting the range of user behaviors we are able to describe;
\item since we are defining the counterpart of well formed systems, the conventions related to well nested erasure blocks hold also for users.  
\end{itemize}

% we find a sequence $\erusr{i}{v}$ in an trace of $U$, we are sure that the value $v$ is coming from the $i$-th position of $\delta$.

Now we have a way to characterize a user which uses $\delta$ as a secret repository and applies the erasure protocol properly, but this is obviously not enough to ensure a \emph{responsible} use of such secrets. For sure we need to constrain the user in a way that a particular secret is used at most once during its interaction with $S$. This is what we dub \emph{secret singularity}.  
\begin{definition}[Secret Singularity]
% Let us suppose that $t\in T(U)$ and $\exists t_1$, $t_2$ and $\alpha=\erusr{i}{v}$ such that $t_1\alpha t_2=t$. Then $U$ is secret singular iff $\forall z_1$, $z_2$ such that $t=z_1\alpha z_2$ we have $t_1=z_1$ and $t_2=z_2$.
%SWITS
%\draftnote{check. Old definition allows $\ldots42!i\ldots 39!i\ldots$. Maybe that's not a problem but it looks odd. // F: sure, that was a stupid mistake, actually your definition is what I was looking for. But the "label" should be $i?v$ anyway, right?}
A user $U$ is secret singular if for any $i$ and any trace $t \in T(U)$, a label of the form $i?v$ (for any $v$) occurs at most once in $t$.  
\end{definition}
The following example should help to understand why we require secret singularity to user.
%SWITS 
%\draftnote{Maybe we can just recall the example in the introduction? F: I will prepare the picture, and then decide if it is better to use it or not after we know exactly how many pages we need}
\begin{example}\label{ex:a}
Let us consider the following system $S$, %in Figure \ref{sys1} 
which contains a chain of two non-nested erasure sessions and where $v$ and $w$ represent generic elements of $\mathbb{V}$ (hence one should imagine as many input edges as the size of that domain). 
%\begin{itemize}
%\item $\mathcal{S}=\{s_0,...,s_8\}$
%\item $\mathcal{T}=\{(s_0,a!BE,s_1);(s_1,a?v,s_2);(s_2,b!v,s_3);(s_3,a!EE,s_4);(s_4,a!BE,s_5);(s_5,a?v,s_6);(s_6,b!v,s_7);(s_7,a!EE,s_8)\}$
%\end{itemize}
\begin{figure}[h]
\begin{center}
\includegraphics[width = 0.6\textwidth]{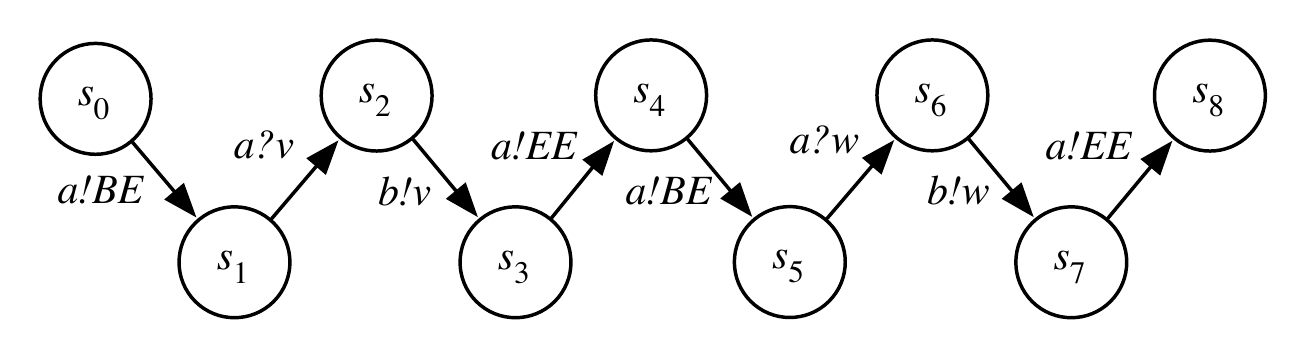}
\end{center}
%\caption{system for Example \ref{ex:a}}\label{sys1}
\end{figure}

This system obviously satisfies the input erasure definition of the previous section. But let us consider the user in Figure \ref{usr1}.
%\begin{itemize}
%\item $\mathcal{S}=\{u_0,...,u_8\}$
%\item $\mathcal{T}=\{(u_0,a?BE,u_1);(u_1,1?v,u_2);(u_2,a!v,u_3);(u_3,a?EE,u_4);(u_4,a?BE,u_5);(u_5,1?v,u_6);(u_6,a!v,u_7);(u_7,a?EE,u_8)\}$
%\end{itemize}
\begin{figure}
\begin{center}
\includegraphics[width = 0.6\textwidth]{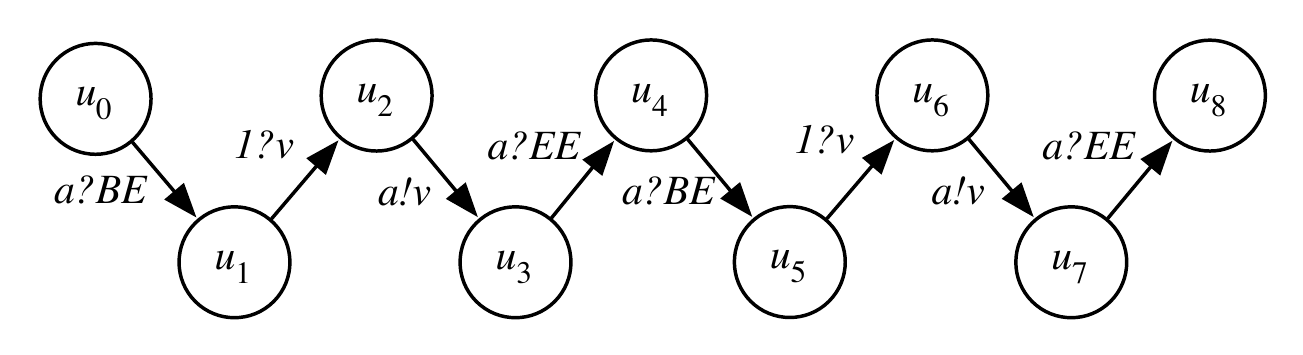}
\end{center}
\caption{user for Example \ref{ex:a}}\label{usr1}
\end{figure}
Suppose that the domain for the secret contains the values 0 and 1, then just two out of the four output operations admitted by $S$ are performed by $U$, namely $(a!0)(a!0)$ and $(a!1)(a!1)$. This happens because the secret is used twice, and we cannot prove that the behavior of $S$ and $U$ after the first erasure session is independent of secret value received (as we will see, this will be a requirement of the definition of composite erasure). Note that secret singularity is not satisfied by $U$, which is instead well formed.
\end{example}

%\draftnote{There were a lot of references to the notion of "erasure session", which (at this point) is clear for system but not defined yet for user. Hence I changed them in equivalent but more informal terms. If you do not like / think it does not work it is just enough to recover the old version}
In order to understand secret singularity, one should consider we are trying to build a structure in which each secret is completely independent of the others. Whenever the same portion of $\delta$ is used more than once, we immediately loose this independency, because changing that portion creates effects in more than one communication of the user to the system. It is therefore useful to think about the $i$ component of an $i?v$ request to $\delta$ as a timestamp representing the moment in which the user requests a secret: in this way secret singularity allows the \emph{actual value} of a secret being used twice, but requires that $\delta$ supports this duplication (i.e. contains two locations $i \not = j$ such that $\delta(i)=\delta(j)$).
 
Two other constraints on the user seem to be necessary to support the erasure mechanism properly:
\begin{itemize}
\item its behavior outside of an erasure phase
should not depend on values of secrets (what we call \emph{secret confinement}), and
\item such dependency should always preserve a stream-like behavior, which means that sequences of outputs proposed by the user have to be independent of both secrets and feedbacks received from the system, in order to ensure user will not be an unconscious storage for $S$.  We call this property \emph{stream ability}. 
\end{itemize}

Secret confinement is a property of the user behavior outside each erasure related part of its structure.
%UPDATED. IT WAS:  which does not contain erasure related exchanges. 
Stream ability, on the other hand, is specific to the part of the behavior within an erasure block. So, in order to describe those properties, we must define a way to separate those complementary parts of $U$'s traces, and then define their intended structure according to the informal requirements we have just stated.

%\draftnote{I would conclude the sentence in this way: "separate those complementary parts of $U$'s traces, and then define their intended structure according to the informal requirements we have just stated".}locate and constrain the various erasure related zones of $U$'s traces. %As a first  step toward this goal is defining them precisely.
%
% \begin{definition}[Erasure zone and erasure frontier]
% \draftnote{We need to discuss this definition}
% Let $U$ be a well formed user with the liveness property.
%  Let $u$ be a state in $\mathcal{S}$ such that $u \stackrel{a?BE}{\rightarrow} u_1$. For each $t \in T(u)$,
% \draftnote{$T(u)$ is a new notation}
%  which contains the closure of the erasure session started in $u$, we can define a prefix $t'$ such that $u \enErase{t'} u_v$, where $v$ represent the value fetched from $\delta$ by a $i?v$ operation in $t'$. The erasure zone centered in $u$, $E_\bullet(u)$, is then defined as the union of all such $t'$, and it comes together with a correspondent erasure frontier $E_\circ(u)$ which is the union of the $u_v$ states reached from $u$ through elements in $E_\bullet(u)$. 
% %We will refer to $u$ as the entrance of the erasure zone.
% \end{definition}

We focus first on the actions of a user which are not part of an erasure block. The following definitions facilitate this.

%\draftnote{I am sorry but I changed your definition trying to make the erasure zone definition easier. It seems to me I did not change the original meaning, if it is not I ask you sorry}
\begin{definition}[(Incomplete) User Erasure Session] %and Erasure Hiding]
A subtrace $t$ is an \emph{user erasure session} % centered in $u$} 
if $u_0 \twoheadrightarrow u \stackrel{t}{\twoheadrightarrow} u_v \twoheadrightarrow$ and $t=(\erusr{i}{v})(t_v \closee )$ such that it is balanced from the erasure point of view (so, for example, in $t_v$
 the number of $a?BE$ events is equal to the number of $a?EE$ events).

A \emph{user incomplete erasure session} %centered in $u$} 
is a subtrace $t$ such that $u_0\twoheadrightarrow u \stackrel{t}{\twoheadrightarrow}$ and $t=(\erusr{i}{v})t'$ where $(\erusr{i}{v})t''$ is not an erasure session for any  $t'' \preceq t'$.

%SWITS
%COMMENTED PART DUE TO UPDATE IN SECRET CONFINEMENT
%Given a trace $z$, we define the \emph{erasure-hidden trace} 
%$\erhide{z}$ to be the 
%trace obtained by removing all maximal erasure sessions and maximal incomplete erasure sessions from $z$.
%We extend $\erhide{\cdot}$ to sets of traces in a pointwise manner.
\end{definition}

Erasure sessions which share the same initial state are then grouped together in erasure zones.

%PART ADDED DUE TO UPDATE IN SECRET CONFINEMENT
\begin{definition}[Erasure Zone and Incomplete Erasure Zone] % which is now centered on secrets rather then on the state which sends it
The (incomplete) erasure zone of a user state $u$, $E_\bullet(u)$, is defined as 
\[
\begin{split}
 E_\bullet(u) &= \{ t   ~|~ u \stackrel{t}{\twoheadrightarrow}, \text{~$t$ is an (incomplete) erasure session} \}  \\
 % E_\circ(u) &=   \{ u_v ~|~ u \ube{i}{v}u_1\enErase{t} u_v, \text{~$t$ is balanced} \} 
\end{split}
\] 
\end{definition}

For each erasure zone we then define the frontier, namely the set of states which are immediately outside the erasure zone.

\begin{definition}[Erasure Frontier]
Let $E_\bullet(u)$ be an erasure zone. The union of $u_v$ such that $\exists u'. u\ube{i}{v}u'\enErase{t_v}u_v$ is called erasure frontier of $E_\bullet(u)$ and it is denoted by  $E_\circ(u)$.
\end{definition}

Now we have everything we need to state secret confinement:
\begin{definition}[Secret Confinement]
Let $U$ be a well formed user. $U$ satisfies secret confinement iff for each state $u$ we have that $\forall u_v,~u_w \in E_\circ(u)$,  $T(u_v)=T(u_w)$.
\end{definition}
%%OLD DEFINITION
%Let $U$ be a well formed user. $U$ satisfies secret confinement iff for each $\delta$ and $\delta'$ we have that $\erhide{T(U(\delta))} = \erhide{T(U(\delta'))}$.
 
% \begin{definition}[Secret confinement]
% Let $U$ be a well formed user with the liveness property, and $U^i$ its erasure free version. $U$ satisfies secret confinement iff for each $\delta$ and $\delta'$ we have that $U^i(\delta) \approx U^i(\delta')$.
% \end{definition}

%A user which satisfies the previous property cannot use the value of its secrets to drive its evolution outside erasure zones: intuitively this property allows the user to protect its own secrets by not remembering their value (not directly, nor indirectly). The following example shows how can a system fool a user which does not satisfies secret confinement.

%\DTNoteB{
It turns out that, for users satisfying secret confinement, differences among secrets are not reflected in differences among behaviors outside erasure zones, since each state in the frontier shows the same set of traces. We can therefore say that, for those users, secret values are \emph{confined} inside erasure zones. %} 
The following example shows how can a system fool a user which does not satisfies secret confinement.
\begin{example}
Let us consider the following user $U$:
%\begin{itemize}
%\item $\mathcal{S}=\{u_0,...,u_5\}$
%\item $\mathcal{T}=\{(u_0,a?BE,u_1);(u_1,1?v,u_2);(u_2,a!v,u_3);(u_3,a?EE,u_4);(u_4,a!v \% 10,u_5)\}$
%\end{itemize}
\begin{figure}[h]
\begin{center}
\includegraphics[width = 0.55\textwidth]{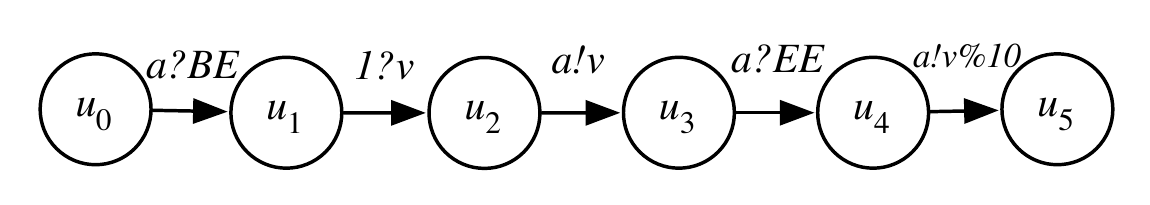}
\end{center}
%\caption{This is the caption at the bottom of the image}
\end{figure}

in which $v \%10$ represent the less significant digit of $v$. The final state of $U$ after the erasure session shows dependency on secret value, and it is quite easy then to figure out a system like:
%\begin{itemize}
%\item $\mathcal{S}=\{s_0,...,s_5\}$
%\item $\mathcal{T}=\{(s_0,a!BE,s_1);(s_1,a?v,s_2);(s_2,a!EE,s_3);(s_3,a?v,s_4);(s_4,b!v,s_5)\}$
%\end{itemize}

\begin{figure}[h]
\begin{center}
\includegraphics[width = 0.55\textwidth]{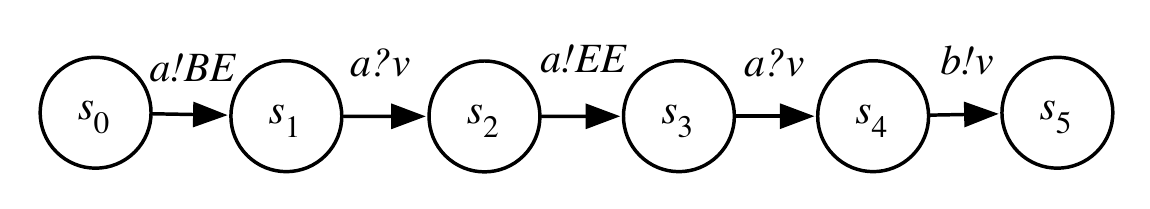}
\end{center}
%\caption{This is the caption at the bottom of the image}
\end{figure}
which is input erasing but it captures (and communicates) a portion of $U$ secret.
\end{example}

Now we need to consider the internal property of erasure zones. We have already said each member of the zone has to show a stream behavior, which basically consists in accepting any input from $S$ without being influenced by it. This is necessary because if $S$ is malicious,  $U$ can be tricked into storing erased data on behalf of $S$. 
\begin{example}\label{ex:streamab}
Let us consider the following system $S$:
%\begin{itemize}
%\item $\mathcal{S}=\{s_0,...,s_7\}$
%UPDATED
%\item $\mathcal{T}=\{(s_0,a!BE,s_1);(s_1,a?v,s_2);(s_2,a!f(v),s_3);(s_3,a!EE,s_4);(s_4,a?w,s_5);(s_5,b!f,s_6);(s_6,b!w,s_7)\}$
%\item $\mathcal{T}=\{(s_0,a!BE,s_1);(s_1,a?v,s_2);(s_2,a!f(v),s_3);(s_3,a!?w,s_4);(s_4,a!EE,s_5);(s_5,b!f,s_6);(s_6,b!w,s_7)\}$
%\end{itemize}
\begin{figure}[h]
\begin{center}
\includegraphics[width = 0.55\textwidth]{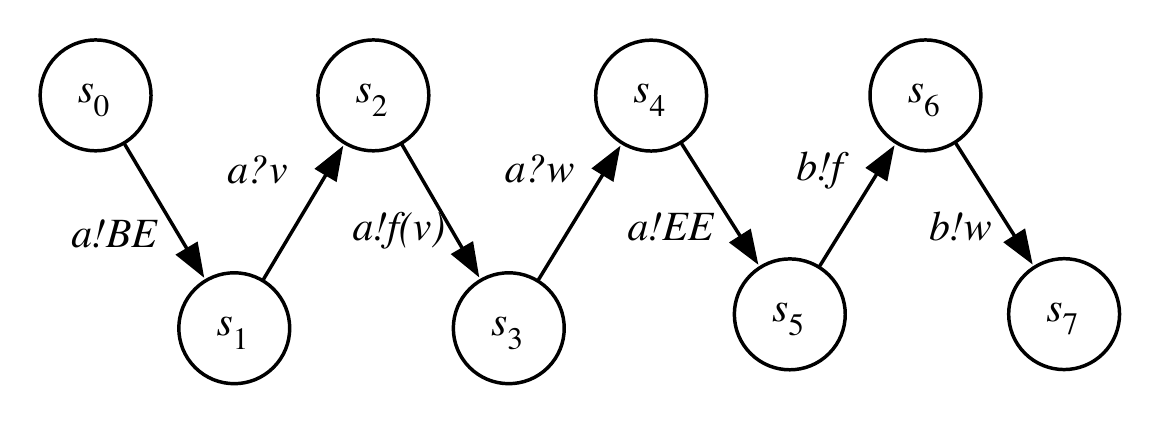}
\end{center}
%\caption{This is the caption at the bottom of the image}
\end{figure}

which represents %UPDATED 
a slight variation ofthe discount example (Section \ref{sec:intro}).
 
Let us also consider the user $U$ in Figure \ref{fig:usrstreab}.

%\begin{itemize}
%\item $\mathcal{S}=\{u_0,...,u_5\}$
%\item $\mathcal{T}=\{(u_0,a?BE,u_1);(u_1,1?v,u_2);(u_2,a!v,u_3);(u_3,a?w,u_4);(u_4,a!w,u_5);(u_5,a?EE,u_6)\}$
%\end{itemize}

\begin{figure}
\begin{center}
\includegraphics[width = 0.55\textwidth]{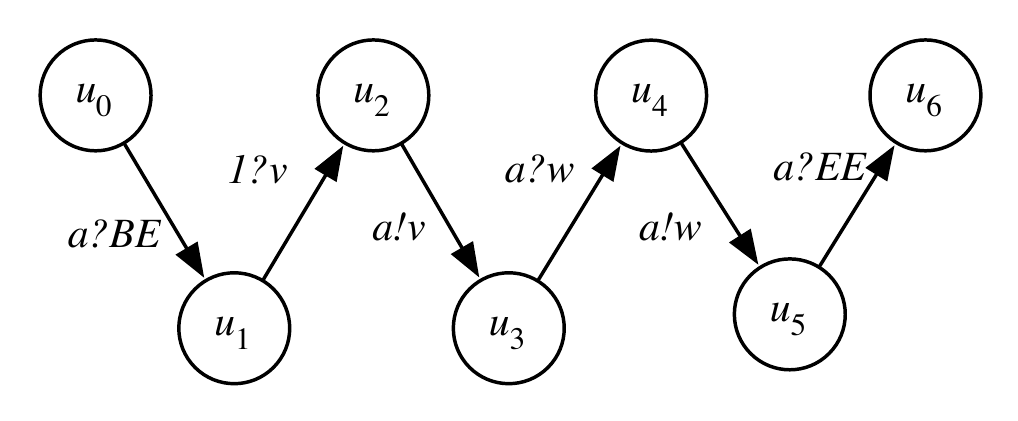}
\end{center}
\caption{system for Example \ref{ex:streamab}}\label{fig:usrstreab}
\end{figure}

Although $U$ satisfies secret confinement (after the reading operation the secret is sent and $U$ does not perform any particular action on it), $S$ is able to send the secret back to the user and wait $U$ to send it back again: in this way it will be possible for $S$ to send the encoding function $f$ and the value $f(v)$ to an attacker without breaking the input erasure property. 
\end{example}

%As we have already seen, inside $U$ there could be many different erasure sessions. Each erasure session always starts with an output of a secret. Due to internal nondeterminism, the same operation can be contained in many different traces of $U$. Moreover, due to input enabled condition we imposed on user, for each possible value in $\textbf{V}$ we have at least one trace which contains the corresponding erasure session for that value. But all those erasure sessions have in common the state from which the erasure protocol for a certain secret starts. We therefore need to collect each erasure session which starts from a given state, as it is specified in the following definition.

%ANTICIPATED 
%\begin{definition}[Erasure Zone] %[Alternative version?]
%The erasure zone of a user state $u$, $E_\bullet(u)$, is defined as 
%\[
%\begin{split}
 %E_\bullet(u) &= \{ t   ~|~ u \stackrel{t}{\twoheadrightarrow}, \text{~$t$ is an (incomplete) erasure session} \}  \\
 % E_\circ(u) &=   \{ u_v ~|~ u \ube{i}{v}u_1\enErase{t} u_v, \text{~$t$ is balanced} \} 
%\end{split}
%\] 
%\end{definition}

%\begin{definition}[Erasure zone]
%Let $U$ be a well defined user. Then we define $E_\bullet(u)$, the erasure zone centered in $u$, to be the set of all (incomplete) erasure session centered in $u$.
%\end{definition}

But how can we actually impose the stream behavior among elements in $E_\bullet(u)$? Since we are not interested in input actions, the idea is that we have to define a relation between traces in which the output values sent to the system depend only on predetermined decisions, which have to hold independently on $\delta$ content and system responses. We thus may find three possible cases while trying to compare two traces:
\begin{itemize}
\item if two non erasure outputs are performed, then the value has to be the same;
\item if two erasure outputs are performed, then the index of the secret used in the transaction has to be the same, because $\delta$ is not under $U$'s control;
\item if one erasure output has to be matched with a non erasure one, then the two values have to be the same.
\end{itemize}

The following output equality relation between traces states those requests formally.

\begin{definition}[Output Equality]
$\outeq$ is the largest symmetric binary relation defined over the set of rules contained in Figure \ref{tab:outeq}.
\begin{figure}[h!]
\begin{center}
\begin{tabular}{ccc}
$\inference*{t\outeq t'}{(\erusr{i}{v})t\outeq(a!v)t'}$ &  $\inference*{t\outeq t'}{(\erusr{i}{v})t\outeq(\erusr{i}{w})t'}$ & $\inference*{t\outeq t'}{(a!v)t\outeq(a!v)t'}$ \\[3ex]
\end{tabular}
\newline
\begin{tabular}{cc}
$\inference*{t\outeq t'}{lt\outeq t'}$ if $l\not \in \{ (\erusr{i}{v}),(a!v)\}$&  $\epsilon \outeq \epsilon$\\ 
\end{tabular}
\caption{Output Equality}\label{tab:outeq}
\end{center}  
\end{figure}

$t$ and $t'$ are output equivalent if $t\outeq t'$.
\end{definition}

Now we can state the stream ability for a user $U$.

\begin{definition}[Stream Ability]
A well formed user $U$ has the stream ability if for all states $u$ %which are erasure entrances 
it holds that $\forall~t,~t' \in E_\bullet(u)$, $t\outeq t'$.
\end{definition}

If a user satisfies all the definitions we stated so far, it is doing all its best to behave in a proper way with respect to erasure. We define such attitude as \emph{erasure friendliness}.

\begin{definition}[Erasure Friendly User $EF(U)$]
Let $U$ be a well formed user which satisfies secret singularity, secret confinement and stream ability. Then we define $U$ as an erasure friendly user, and we denote all its properties with the notation $EF(U)$.
\end{definition}

So far we worked mainly on the user structure, but we also need to rule out the possibility that $U$ is able to deadlock $S$ by performing the wrong communication, and we have to do that because $S$ is performing erasure only if it is able to reach the end of the erasure session. To avoid such kind of scenarios we impose \emph{liveness} on the user side, according to the following definition. 

\begin{definition}[Liveness]
Let $S=(\mathcal{S},\mathcal{L},\mathcal{T})$ be a well formed system. We say $s \in \mathcal{S}$ is a system input state if $s \stackrel{a?v}{\rightarrow} s_v \in \mathcal{T}$\footnote{Recall that input enabled condition on $S$  makes $s$ able to accept each possible input $w \in \mathbb{V}$.}. Let $U$ be a well formed user. Then $U$ has the liveness property for $S$ if:
\begin{itemize}
\item $u_0|s_0 \twoheadrightarrow u_1|s_1$ and $s_1$ is a system input state imply that $\exists~u_2$ such that $u_1 \stackrel{u}{\twoheadrightarrow} u_2 \stackrel{a!v}{\rightarrow}$ and $u_1|s_1 \twoheadrightarrow u_2|s_1 \rightarrow$;
\item $u_0|s_0 \twoheadrightarrow u_1|s_1$ and $s_1 \rightarrow s_2$, $s_1$ not a system input state, imply that $\exists~u_2$ such that $u_1 \twoheadrightarrow u_2$ and $u_1|s_1 \twoheadrightarrow u_2|s_2$. 
\end{itemize} 
\end{definition}

%%%%%%%%%%%%%
%%THIS IS A SECTION%
%%%%%%%%%%%%%
\section{Composite Erasure and Soundness of $U|S$}\label{sec:thm}
%\draftnote{``Joint erasure'' sounds more appropriate}

Since we are reasoning about $U|S$, let us import some notational conventions from the previous sections. If we assume that both $U$ and $S$ are well formed we are sure that each time a $BE$ communication is performed in a trace of $U|S$, an appropriate reading operation $i?v$ and a communication $v$ will follow immediately. So let  $u|s\cbe{i}{v}u'|s'$ be an abbreviation for $u|s\stackrel{BE}{\rightarrow}u_1|s_1\stackrel{i?v}{\rightarrow}u_2|s_1\stackrel{v}{\rightarrow}u'|s'$ and let $\enErase{t}$ denote a finite erasure session.

We can now state the definition for the erasure property which involves both system responsibility towards secret provided by user, and user's caution in treating potentially dangerous information from system in a safe way. 

%\draftnote{I suggest ``composite erasure''}
\begin{definition}[Composite Erasure $EC(U|S)$]
Let $S$ be a deterministic, well-formed system, with initial state $s_0$. Let $U$ be a well-formed user with the liveness property with respect to $S$ and let $\delta$ an arbitrary memory for $U$, such that together they define an instance $U(\delta)$. Consider then a trace $u_0|s_0 \stackrel{t}{\twoheadrightarrow} u_1|s_1 \cbe{i}{v}  u_2|s_2 \enErase{t'} u_3|s_3 \stackrel{z}{\twoheadrightarrow}$ of $U(\delta)|S$. 

Then $S$ and $U$ are composite erasing ($EC(U|S)$) if for any instance $U(\delta')$, such that $\delta'$ differs from $\delta$ only at position $i$, we have that $u_0|s_0\stackrel{t}{\twoheadrightarrow} u_1|s_1 \cbe{i}{w} u_2'|s_2' \enErase{t''} u_3'|s_3' \stackrel{z}{\twoheadrightarrow}$.
\end{definition}

The input erasure definition has two conditions to ensure that the scenario after erasure is independent of secrets. The first is related to the system behavior (equality between traces), while the second refers to the status of the streams associated to $S$, which have to be consumed in the same way. Here the second condition does not correspond to an explicit constraint, but since the behavior $z$ includes both contribution of $S$ and $U$, it is implicit in the equality of trace.   

%SWITS
%\draftnote{I find this paragraph hard to follow. What is the ``environment'' here? F: last version, needs to be checked}

Now the main result of this work can be stated, which relates local properties on both user and system sides to the notion of  composite erasure we have just defined.

\begin{theorem}[Soundness of $U|S$ wrt composite erasure]
Let $S$ be an input erasing system ($E(S)$) and $U$ be an erasure friendly user ($EF(U)$) which satisfies the liveness condition for $S$. Then $U|S$ satisfies the composite erasure properties $EC(U|S)$.
\end{theorem}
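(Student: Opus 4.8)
The plan is to reduce the joint property to the system-side guarantee $E(S)$ by reading off, from the given run of $U(\delta)|S$, the input stream that $S$ actually consumes, perturbing that stream at a single position, and lifting the matching run that $E(S)$ provides back to a run of $U(\delta')|S$. Erasure friendliness is what makes the lifting go through: secret singularity localises the perturbation, stream ability fixes the values $S$ receives during the session, and secret confinement controls the behaviour afterwards, while liveness guarantees that the reconstructed run never deadlocks. Concretely, I would fix the given trace $u_0|s_0 \stackrel{t}{\twoheadrightarrow} u_1|s_1 \cbe{i}{v} u_2|s_2 \enErase{t'} u_3|s_3 \stackrel{z}{\twoheadrightarrow}$ and project it onto each component, obtaining on the system side a run $s_0 \twoheadrightarrow s_1 \sbe{v} s_2 \enErase{\sigma} s_3 \stackrel{z_S}{\twoheadrightarrow}$ together with the stream $I$ that $S$ consumes, in which $v=\delta(i)$ enters as the $n$-th input. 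The perturbation to accommodate is the replacement of $\delta$ by $\delta'$, which differs from $\delta$ only at location $i$.

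First I would reproduce the prefix and the entry. By secret singularity a label of the form $i?v$ occurs at most once in any user trace and it already occurs at the entry, so it cannot occur along $t$; hence the user part of $t$ reads only locations on which $\delta$ and $\delta'$ agree and is also a $U(\delta')$ trace, and since $S$ is deterministic and sees the same inputs its part is unchanged, so $U(\delta')|S$ reproduces $u_0|s_0 \stackrel{t}{\twoheadrightarrow} u_1|s_1$. For the entry, the first well-formedness clause provides, after $a?BE$ from $u_1$, a fixed location $i$ from which the user reads and immediately forwards whatever value it finds; instantiating with $\delta'(i)=w$ and using that $S$ is input-enabled gives $u_1|s_1 \cbe{i}{w} u_2'|s_2'$.

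The session is the crux. Every output the user emits inside an erasure session lies in the zone $E_\bullet(u_1)$, so by stream ability any two such sessions are related by $\outeq$; the sequence of values the user feeds to $S$ during the session is therefore fixed in structure, each slot carrying either a constant non-erasure value or the content $\delta(j)$ of a fixed location $j$. By secret singularity the perturbed location $i$ was already consumed at the entry, so every in-session erasure output uses some $j\neq i$, where $\delta$ and $\delta'$ coincide; hence, whatever feedback the perturbed $S$ returns, the values $S$ receives during the session agree with those in $I$. Setting $I'$ to be $I$ with its $n$-th entry replaced by $w$, $E(S)$ then supplies a system run $s_1 \sbe{w} s_2' \enErase{\sigma'} s_3' \stackrel{z_S}{\twoheadrightarrow}$ with $i(\sigma)=i(\sigma')$ and the same continuation $z_S$. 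I would weave this system run together with the user's fixed output behaviour using liveness: driving the combined system along $S$'s perturbed path, at each point where $S$ expects an input the user advances (liveness, first clause) to emit the next value prescribed by $\outeq$, and at each point where $S$ outputs the user absorbs it by input-enabledness (liveness, second clause). The equality $i(\sigma)=i(\sigma')$ matches the number of system inputs to the number of user outputs fixed by $\outeq$, so the two sides stay in lock-step and reach the matching $a!EE$/$a?EE$ synchronisation together, yielding $u_2'|s_2' \enErase{t''} u_3'|s_3'$.

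For the continuation, both $u_3$ and $u_3'$ lie in the frontier $E_\circ(u_1)$, so secret confinement yields $T(u_3)=T(u_3')$; this lets the user replay its original post-session trace from $u_3'$, feeding $S$ exactly the inputs it saw before, so the stream indeed agrees with $I$ beyond position $n$ and the continuation $z_S$ promised by $E(S)$ is realised. Since combined transitions are determined by the component transitions through the rules of Figure~\ref{tab:AbsCom}, replaying the unchanged label sequence $z$ from $u_3'|s_3'$ produces the witness $u_0|s_0 \stackrel{t}{\twoheadrightarrow} u_1|s_1 \cbe{i}{w} u_2'|s_2' \enErase{t''} u_3'|s_3' \stackrel{z}{\twoheadrightarrow}$ required by $EC(U|S)$. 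The hard part will be the weaving in the session step: $E(S)$ asserts only that \emph{some} perturbed system path completes the session with the right input count and continuation, and this path may traverse different intermediate states and emit different feedback and $b$-outputs than the original, so showing that the user --- whose outputs are pinned down only up to $\outeq$ and who must now absorb this different feedback --- can nonetheless stay synchronised with $S$ up to the matching $a!EE$ without deadlock is the real content, and it is exactly to exclude such deadlocks that liveness is assumed. Some additional care is needed to make the ``stream unchanged except at position $n$'' step fully rigorous in the presence of nested erasure sessions, whose read indices are again distinct from $i$ by secret singularity.
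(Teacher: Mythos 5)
Your proposal is correct and follows essentially the same route as the paper's proof: reproduce the prefix via secret singularity plus determinism, the entry via well-formedness plus input-enabledness, the session via stream ability plus liveness, then invoke $E(S)$ for the system's post-erasure behaviour and secret confinement plus determinism for the user's. The only (presentational) difference is the order in the session step --- the paper first builds the combined $U(\delta')|S$ session run and then applies $E(S)$, whereas you apply $E(S)$ to the perturbed stream first and then weave the user along the resulting system run --- and your explicit flagging of the weaving/synchronisation gap is in fact a point the paper's own brief proof glosses over.
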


\begin{proof}

Consider a memory $\delta$ for $U$ and suppose that $U(\delta)|S$ produces a trace $u_0|s_0 \stackrel{t}{\twoheadrightarrow} u_1|s_1 \cbe{i}{v}  u_2|s_2 \enErase{u} u_3|s_3 \stackrel{z}{\twoheadrightarrow}$. 

Now consider a $\delta'$ differing from $\delta$ only at position $i$, where
$\delta'(i)=w$.
We then need to prove $U(\delta')$ together with $S$ produces the trace $u_0|s_0\stackrel{t}{\twoheadrightarrow} u_1|s_1 \cbe{i}{w} u_2'|s_2' \enErase{u'} u_3'|s_3' \stackrel{z'}{\twoheadrightarrow}$ such that $z=z'$.

Let us start showing that $U(\delta')|S$ produces $u_0|s_0\stackrel{t'}{\twoheadrightarrow} u_1'|s_1'$ such that $t=t'$, $s_1=s_1'$ and $u_1=u_1'$. Applying user well formedness and secret singularity we have that $t$ is a trace which does not contain local reads on index $i$. Hence also $U(\delta')|S$ behaves according to $t$, reaching the same intermediate states $U(\delta) | S $ was able to reach, thus we have $t=t'$, $s_1'=s_1$ and $u_1'=u_1$. 

Now we need to show $u_1|s_1 \cbe{i}{w} u_2'|s_2' \enErase{u'} u_3'|s_3'$ when the system is communicating with the user instance $U(\delta')$. Due to $S$ determinism, $s_1$ emits a request of a secret value as it happens in the $U(\delta)|S$ execution. Then, since $U$ is input enabled an well formed, it will fetch the secret contained in $i$-th location of $\delta'$ (namely $w$) and send it to the system. This is represented by the first step $u_1|s_1 \cbe{i}{w} u_2'|s_2'$.

According to the definition of erasure zone, both $\erusr{i}{v}$ and $\erusr{i}{w}$ are the first actions of two erasure sessions contained in the same erasure zone $E_\bullet(u_1)$, hence stream ability applies. This, together with user liveness, implies that it exists a state $u_3'$ such that $u_2'|s_2'  \enErase{u'} u_3'|s_3'$ and the sequence of output contained in $u'$ and $u$ are the same. This property, which holds for an arbitrary value related to secret $i$, allows us to apply input erasure condition on $S$ side, having $s_3 \stackrel{\alpha}{\twoheadrightarrow} $ and $s_3' \stackrel{\alpha}{\twoheadrightarrow}$. 

Since secret confinement holds, we have that each trace of $u_3$ finds a correspondent trace on $T(u_3')$, hence due to $S$ determinism we have $u_3'|s_3' \stackrel{z}{\twoheadrightarrow}$ as required.
\end{proof}

%%PART REMOVED DUE TO CHANGES ON SECRET CONFINEMENT
%Now we have to distinguish two cases. If $E\_bullet(u_1)$ was maximal, then it is straightforwardly possible to apply secret confinement on user side, which gives equality on traces going through $u_3$ as well as $u_3'$. Then for the liveness property on the user side we have $u_3'|s_3' \stackrel{z}{\twoheadrightarrow}$ and $u_3|s_3 \stackrel{z}{\twoheadrightarrow}$. Otherwise $E\_bullet(u_1)$ was a portion of a maximal erasure session $E\_bullet(u')$. Then, due to system determinism, it is possible to split the $z$ trace in two contributions $z_1$ and $z_2$ such that $u_3|s_3 \stackrel{z_1}{\twoheadrightarrow}u_4|s_4$ and  $u_3'|s_3' \stackrel{z_1}{\rightarrow}u_4'|s_4'$ hold for stream ability of $E\_bullet(u')$, while $u_4|s_4 \stackrel{z_2}{\twoheadrightarrow}$ and $u_4'|s_4' \stackrel{z_2}{\twoheadrightarrow}u_4|s_4$ hold for secret confinement applied on $E\_bullet(u')$.

%\end{proof}

\section{Related Work}\label{sec:related}

%SWITS
%\draftnote{is it work or works? If it is work a sentence in the outline has to be updated}

This work has studied an interactive form of block structured
erasure. More expressive non-block structured erasure notions have
been studied by Chong and Myers \cite{er1,Chong:Myers:End}, where the
erasure point is associated with an arbitrary condition on the
computation history of the system, but these do not consider an
interactive system. Erasure and declassification for multithreaded
programs is studied in \cite{Jiang+:Handling}, although what the
authors call erasure is merely the much weaker notion of upgrading the
security classification of a variable.

Erasure as defined here is a close relative of
noninterference. Noninterference for interactive systems have been
studied extensively - for example in language based setting
\cite{str1} enforce language based security conditions in an
imperative language which allows input-output interactions, as well as
nondeterminism. This result is obtained through the notion of
\emph{strategy}, which is used to represent the behavior of the user
(secrets provider) as a function of previously exchanged values. In
\cite{str2} the results of \cite{str1} are refined by classifying the
strategies according to their expressive power and proving that it is
sufficient to model the user strategy as a simple stream of values in
the case that the system is deterministic.  With respect to these
works our modeling goal is rather different. A user, for example, is
modeled as having a stream-like behavior not because it is
\emph{sufficient} (as in the noninterference case \cite{str2}) but because it is
\emph{necessary} in order that the user does not feed data
into a system which depends on previously erased values.
%   is different because we are trying to obtain an abstract definition of the data provider which has to be, in the same time, much more liberal than the notion of strategy, but simple and structured enough to keep (a slight variation of) the language in \cite{Hunt:Sands:ESOP08} sound with respect to erasure.

Erasure properties are related to \emph{usage control} policies (e.g., 
\cite{Park:Sandhu:Towards,Park:Sandhu:UCONABC,Pretschner+:Distributed})
%\emph{obligations} \cite{er2}
-- however the fundamental difference appears to be that the usage control area is related to access control rather than information flow control, so for example the scenario in the introduction where the user unwittingly stores a secret on behalf of the system would seem to be out of reach of those methods.

Another area where connections to erasure may be seen is in a recent account of
\emph{opacity} at the level of transition systems \cite{Bryans+:Opacity}, although at the time of writing deeper connections between the concepts remain to be established.

\section{Conclusions and Further Work}\label{sec:conclusions}
%In \cite{Hunt:Sands:ESOP08} Hunt and Sands studied a notion of \emph{information erasure} for systems which receive secrets intended for limited-time use. 

The definition of  \emph{information erasure} studied in \cite{Hunt:Sands:ESOP08}
considers an interactive environment, composed by an erasing system (data processor) and many users (secrets providers), but did not define the latter precisely. Here we tried to provide a formal but abstract model of the user and a collection of requirements called \emph{erasure friendliness} on its behavior, which are sufficient to ensure \emph{composite erasure}, namely the erasure property obtained by composing an erasure friendly user together with an erasing system. 

In doing so we identify stronger requirements on the user than those
informally described in the previous work, which are basically a
countermeasure for the liberal interactions an erasing system can
perform inside an erasure session. In particular it turns out that
erasure related data, as well as system feedbacks, should not be
examined by the user during an erasure session (i.e. communicated or
received without understanding their actual value).

The results we achieved in this work suggest a wide range of future developments. 

\paragraph{Completeness}
Considering our contribution in isolation, a natural enhancement is trying to prove the completeness on our approach: % ; in particular, the erasure friendly definition, although reasonable, is not formally justified and since it is rather onerous from the user perspective, 
we would like to show that erasure friendliness is not only a sufficient condition but a necessary one. In other words if a user is not erasure friendly then we would like to show the existence of an input erasing system for which composition yields system which is not jointly erasing.

% the only way to enforce the erasure mechanism properly.

\paragraph{Multilevel and Mutual Erasure}
% Apart from that, there are a lot of tasks to close the gap between the previous work on erasure and our achievements. As we have already said in the introduction, a
% Although we are trying to capture an abstract notion of erasure, we proposed a  of simplifications with respect to the approach adopted in \cite{Hunt:Sands:ESOP08}. It is therefore necessary to understand how our results fit in the old scenario. 
One of the simplifications we made to the earlier definitions of erasure was to treat only ``complete erasure''. In the general scenario we have a multilevel lattice and erasure from one level to a higher level. 
Adding the multilevel security structure on the system side, in which each security level comes equipped with a communication channel, allows to recover the original notion of ``erasure below a certain level $b$''. But to have this it is necessary to rearrange the system representation, in order to examine contents of output actions of $S$ below the intended erasure level. More difficult is to recover the system full input-output behavior: according to our first intuitions on the problem, allowing the system to perform I/O operations in each channel requires the classification of each principals of the communication according to three possible roles (secret provider, secret observer -- a user higher than erasure level, and secret co-manager -- a user who receives information tainted by other users' secrets, which are supposed to be erased). Such complexity increases the number of possible unwanted leaks of secrets, and seems to require a further strengthening of behavioral constraints for system users.

%\paragraph{Tradeoffs between User and System}
%The definition of system erasure which we have studied is not set in stone. 
%
%  Other theoretical issues are worth to be explored. The first is related to the behavioral condition we have in the system side to guarantee erasure: since we require that the only invariant aspect of its behavior is related to the overall number of input requests inside an erasure session, we had to impose the stream ability on the user side, which reduces dramatically the expressivity of the user inside an erasure related interaction with $S$. 
%It seems promising to explore strengthening of the erasure condition for the system in order to allow a more liberal user to be judged as erasure friendly.
%In particular in the definition of an input erasing system, the constraint on the communications during an erasure session are very weak -- simply requiring that the number of input actions does not depend on the secret to be erased. This in turn forces the user to be very conservative. 

\paragraph{Programming with Erasure}
The definition of erasure that we have studied is, like
noninterference, a very strong property. In practice -- as with
noninterference -- this means that one may have to
program systems in a particular style in order to satisfy erasure, or 
to introduce controlled mechanisms to weaken the erasure requirements, analogous to 
\emph{declassification}. As an example, consider the case of a credit card
transaction where the card can be denied by the bank. It would be
natural for the system (merchant) to request a new card from the user. But if
this were done in the same session then this would not be erasing: the
number of inputs from the user would depend on the first card number. This can be solved in two ways (i) by structuring the system so that the first interaction is concluded as a failed interaction, and inviting the user to engage in a subsequent purchase (where hopefully the user will try a valid card), or (ii) by \emph{revoking} the erasure promise on discovering an invalid card. It would seem that the various \emph{dimensions of declassification} \cite{Sabelfeld:Sands:Dimensions} would be suitable to study erasure revocation mechanisms.

\paragraph{Implementation: A Safe Wallet}
There are also implementation perspectives related to our results:
since we represented the user through two distinct components ($U$ for
the computation, $\delta$ as a secret repository), we have begun
looking at how it would be possible to define an advanced memory
component (a ``wallet'') $W$ which acts as a proxy between a user and
a system. The idea would be to achieve an erasure-friendly user
(modulo liveness issues) as a composition $A|W(\delta)$ of an
\emph{arbitrary} user $A$ and a generic safe wallet $W$ containing secrets
$\delta$. Since it is reasonable to consider the
    wallet as a real implemented agent, we believe it will be an
    effective mechanism to enforce our behavioral properties in a
    purely syntactic manner. The wallet $W$, together with a certified
  input erasing system $S$ (a trusted authority mechanism could certify
  that property, using the type system proposed in
  \cite{Hunt:Sands:ESOP08} for example), would therefore create a
  composite erasing environment without requiring the real user to
  constrain his own behavior.

\paragraph{Acknowledgements}
Sebastian Hunt provided valuable inputs throughout the development of this paper. Many thanks to our colleagues in the ProSec group at Chalmers and to the anonymous referees for useful comments and observations.
This work was partially financed by grants from the Swedish research agencies VR and SSF, and the
European Commission IST-2005-015905 MOBIUS project.

% which is able to lift an arbitrary user to the level of an erasure friendly one. To obtain this result the idea is to add some computational features to $\delta$, in order to let it be expressive enough to protect the user behavior inside erasure session, modifying its interface dynamically to avoid harmful I/O operations.

\bibliographystyle{eptcs} % or whatever you prefer
\bibliography{bibliography}%,literature}
\end{document}